\newtheorem{lemma}{Lemma}
\newtheorem{theorem}{Theorem}
\title{Optimal Universal Lossless Compression with Side Information}
\author{Yeohee~Im, ̃~\IEEEmembership{Student~Member, ̃~IEEE} and~Sergio~Verd\'u, ̃~\IEEEmembership{Fellow,~ ̃IEEE}
\thanks{This work was supported by the Center for Science of Information,
an NSF Science and Technology Center under Grant CCF-0939370. This paper was presented at the 2017 IEEE International Symposium on Information Theory \cite{IV2017}.}
\thanks{The authors are with the Department of Electrical Engineering, Princeton
University, Princeton, NJ 08544 USA (e-mail: yeoheei@princeton.edu; verdu@princeton.edu)}
}
\begin{document}
\maketitle

\begin{abstract}
This paper presents conditional versions of Lempel-Ziv (LZ) algorithm for settings where compressor and decompressor have access to the same side information.
We propose a fixed-length-parsing LZ algorithm with side information, motivated by the Willems algorithm, 
and prove the optimality for any stationary processes.
In addition, we suggest strategies to improve the algorithm which lower the data compression rate. 
A modification of a variable-length-parsing LZ algorithm with side information is proposed
and proved to be asymptotically optimal for any stationary and ergodic processes.
\end{abstract}

\begin{IEEEkeywords}
Universal lossless compression, compression with side information, Lempel-Ziv coding, repeated recurrence time.
\end{IEEEkeywords}

\section{Introduction}

An optimal lossless data compression to encode messages with the smallest possible average codelength was developed by Huffman \cite{H1952}.
Rissanen \cite{R1976} discovered arithmetic coding, which maps variable-length strings to binary intervals.
Since knowledge about the statistics of source is not available in many cases, universal source coding has received considerable interest. 
Lempel-Ziv (LZ) algorithms \cite{LZ1976,ZL1977,ZL1978} are widely known universal coding
that is implemented with parsing source, 
and underlie several prevalent compression applications.
The Context Tree Weighting method \cite{WST1995} weights the model distributions recursively using a binary context tree to calculate distribution and perform prediction. 

The LZ77 algorithm developed in \cite{ZL1977} parses strings into shorter phrases
by searching a maximal-length copy of the unparsed string, from the past string that has already conveyed.
The sliding-window version of LZ77 was shown to be optimal in \cite{WZ1994} for any stationary and ergodic processes.
Willems \cite{W1989} proposed a fixed-length counterpart of LZ coding,
and proved the data compression rate approaches the entropy rate capitalizing on Kac's lemma on repetition times \cite{K1947}.

In many data compression applications, compressor and decompressor have access to the same side information.
For example, the compressor may want to convey a new version of a file, 
whose previous version is available at both compressor and decompressor.
In bioinformatics, 
data compression with side information is useful in genomic sequencing,
where a known reference sequence can be regarded as side information \cite{YKK2001}.
Based on LZ77, \cite{PR2001} presented an algorithm that compresses a target sequence by parsing into phrases having a longest match with a reference.
In \cite{COM2012}, an algorithm was introduced to compress a target sequence given a reference genome with considering deletion and insertion operations, motivated by the sliding-window LZ coding.

File servers have to broadcast newly updated files to local users when software update is required, and in this case, old files can take the role of side information.
The \textit{rsync} algorithm \cite{TM1996} is designed to update a file 
with identifying identical parts between two files.
The first step of the \textit{rsync} algorithm calculates parity checksums of two files in order to locate where the files match.
For those matching segments, in the second step, a stronger checksum is compared.
On top of that, side information is present in data compression of image processing.
A noisy version of image serves as side information to compressor and decompressor in order to transmit the digitized image \cite{PR2001},
and a low-resolution image is given in the form of side information to decode higher-resolution image \cite{SK2000}.
In video compression, previous frames are used as side information to predict current frames \cite{AZG2002}.

There have been studies on universal data compression for settings where side information is given to compressor and decompressor.
In \cite{CKV2006}, an optimal algorithm with side information was designed 
based on the Context Tree Weighting method developed in \cite{WST1995}.
The conditional Multilevel Pattern Matching code was introduced in \cite{CKV2006},
which are optimal for any stationary and ergodic processes.
In addition, the problem of universal source coding with side information has been introduced applying LZ codings. 
Using the algorithm that parses joint strings \cite{Z1985} devised for universal decoding, 
the LZ78 algorithm was applied for side information in \cite{UK2003}.
In \cite{SB1995}, a conditional version of the sliding-window LZ algorithm is designed with side information.
The optimality of the algorithm was shown in \cite{JB2008} 
for stationary and ergodic sources with exponential rate for entropy, of which the probability of the complement of a typical set decreases exponentially.

This paper describes a universal fixed-length-parsing algorithm with side information, 
motivated by the algorithm without side information proposed by Willems in \cite{W1989}. 
Our proposed algorithm is shown to be optimal for any jointly stationary sequences.
Furthermore, we propose a modification of the sliding-window LZ algorithm with side information \cite{SB1995}, which allows variable-length-parsing,
and prove the optimality of the algorithm for any stationary and ergodic sources.
Section \ref{section:fix} and Section \ref{section:var} describe the algorithms,
which are analyzed in Section \ref{section:ana}.
The numerical results are given in Section \ref{section:sim}.

\section{Fixed-length-parsing universal compression} \label{section:fix}

In this section, we propose an extension of the Willems algorithm for fixed-length parsing which takes advantage of the side information available to the compressor and the decompressor.
Let $\mathbf{X}=(\cdots,X_1,X_2,\cdots)$ be source that will be encoded by the compressor.
Side information $\mathbf{Y}=(\cdots,Y_1,Y_2,\cdots)$ is known to both of the compressor and the decompressor.
$\mathbf{X}$ and $\mathbf{Y}$ take values on finite alphabets $\mathcal{A}$ and $\mathcal{B}$, respectively.
We assume the processes $\mathbf{X}$ and $\mathbf{Y}$ are jointly stationary but are not necessarily ergodic. 
As in the Willems algorithm, the proposed algorithm deals with the source and the side information by parsing them into phrases of fixed size $L$. 
The compressor aims to compress $N$ phrases of size $L$, i.e., $(X_1,\cdots,X_{NL})$.

\subsection{Algorithm 1}
Define $k\triangleq\lceil L\log_2|\mathcal{A}|\rceil$, 
which is the number of bits needed to represent $L$ symbols from the alphabet $\mathcal{A}$.
The algorithm starts by sending $k$ bits to convey the first phrase $(X_1,\cdots,X_L)$ without any compression.
At the $i$-th step, the compressor finds out where the same values of the $i$-th phrase, $(X_{(i-1)L+1},\cdots,X_{iL})$ and $(Y_{(i-1)L+1},\cdots,Y_{iL})$, appeared simultaneously before, for the last time.
Tracking back until the location of that $(\mathbf{X},\mathbf{Y})$-match, count the number of $\mathbf{Y}$-matches of the phrase $(Y_{(i-1)L+1},\cdots,Y_{iL})$.
As the compressor transmits the information of the number of $\mathbf{Y}$-matches, the decompressor can recognize the location of the $(\mathbf{X},\mathbf{Y})$-match uniquely.
The number of $\mathbf{Y}$-matches is conveyed using $h_k(\cdot)$ the universal prefix code for $[1 : 2^k]$\footnote{$[a:b]\triangleq\{t\in\mathbb{Z}~:~ a\leq t\leq b\}$.} in \cite{W1989}, the length of which is
	\begin{align}
	\ell(h_k(n))=\left\lbrace\begin{array}{ll} \lceil \log_2(1+k)\rceil+\lfloor \log_2n\rfloor,& \text{if }n<2^k,\\ \lceil \log_2(1+k)\rceil,&\text{if }n=2^k,\end{array}\right.
	\end{align}
where $\ell(\cdot)$ is the length function.

If the number of $\mathbf{Y}$-matches $n_i \in [1:2^k-1]$, the compressor sends $h_k(n_i)$.
When $n_i\geq 2^k$ or there is no match, $h_k(2^k)$ is transmitted, followed by the uncompressed sequence $(X_{(i-1)L+1},\cdots,X_{iL})$.

\begin{algorithm}[!t]
\caption{ }
\begin{algorithmic}
\State{Send $k$ bits to describe $(X_1,\cdots,X_L)$.}
\For{$i=2:N$}
	\begin{align*}
	s_i=\min\bigg\{t~:~ 0< t< (i-1)L+1,~(XY)_{(i-1)L+1}^{iL}=(XY)_{(i-1)L+1-t}^{iL-t}\bigg\}
	\end{align*}
\If{$s_i>0$} 
	\begin{align*}
	n_i=\left|\left\{t\in[1,s_i]~:~Y_{(i-1)L+1}^{iL}=Y_{(i-1)L+1-t}^{iL-t}\right\}\right|.	
	\end{align*}
\Else $~~n_i=0$.
\EndIf
\If{$n_i\in[1:2^k-1]$} send $h_k(n_i)$.
\Else 
\State{Send $h_k(2^k)$.}
\State{Send $X_{(i-1)L+1}^{iL}$ without compression using $k$ bits.}
\EndIf
\EndFor
\end{algorithmic}
\label{alg:main}
\end{algorithm}

\subsection{Algorithm 2} 
If Algorithm \ref{alg:main} cannot find an $(\mathbf{X},\mathbf{Y})$-match within the offset $2^k$, the compressor simply sends $h_k(2^k)$ 
and the uncompressed phrase.
There is, however, a fair chance of finding an $\mathbf{X}$-match in the past, 
permitting the compressor to capitalize on the original LZ77 algorithm.

After sending $i-1$ phrases, the compressor looks back for the $(\mathbf{X},\mathbf{Y})$-match of the $i$-th phrase,
and calculates the number of $\mathbf{Y}$-matches $n_i$ 
until the $(\mathbf{X},\mathbf{Y})$-match appears as in Algorithm 1. 
In case of $n_i\in[1:2^k]$, the flag bit 0 is sent 
to denote that an $(\mathbf{X},\mathbf{Y})$-match was found, and $h_k(n_i)$ is delivered.
Otherwise, the compressor forwards the flag bit 1 and finds out whether there exists an $\mathbf{X}$-match. 
When the match is found, the number of symbols, $r_i$, that it has to look back to reach the match is counted,
and $h_m(r_i)$ is sent, if $1\leq r_i\leq 2^m-1$, for some fixed integer parameter $m$. 
When $r_i\geq 2^m$ or no match was found, $h_m(2^m)$ is conveyed, followed by $k$ uncompressed bits.

\begin{algorithm}[!t]
\caption{ }
\begin{algorithmic}
\State{Send $k$ bits to describe $(X_1,\cdots,X_L)$.}
\For{$i=2:N$}
	\begin{align*}
	s_i=\min\bigg\{t~:~ 0< t< (i-1)L+1,~(XY)_{(i-1)L+1}^{iL}=(XY)_{(i-1)L+1-t}^{iL-t}\bigg\}
	\end{align*}
\If{$s_i>0$} 
	\begin{align*}
	n_i=\left|\left\{t\in[1,s_i]~:~Y_{(i-1)L+1}^{iL}=Y_{(i-1)L+1-t}^{iL-t}\right\}\right|.	
	\end{align*}
\Else $~~n_i=0$.
\EndIf
\If{$n_i\in[1:2^k]$}
\State{Send a flag bit 0 and $h_k(n_i)$.}
\Else 
\State{Send a flag bit 1.}
	\begin{align*}
	r_i=\min\bigg\{t~:~ 0< t< (i-1)L+1,~X_{(i-1)L+1}^{iL}=X_{(i-1)L+1-t}^{iL-t}\bigg\}
	\end{align*}
\If{$1\leq r_i\leq 2^m-1$} 
\State{send $h_m(r_i)$.}
\Else
\State{Send $h_m(2^m)$.}
\State{Send $X_{(i-1)L+1}^{iL}$ with $k$ bits.}
\EndIf
\EndIf
\EndFor
\end{algorithmic}
\label{alg:mod1}
\end{algorithm}

\subsection{Algorithm 3} \label{sub:modi2}

In Algorithm \ref{alg:main}, the prefix code $h_k(\cdot)$ is used to convey the number of $\mathbf{Y}$-matches found until the $(\mathbf{X},\mathbf{Y})$-match is reached. 
The code assigns a codeword when the number of matches is in $[1:2^k-1]$.
However, when the sequences are not long enough, the number of all $\mathbf{Y}$-matches can be smaller than $2^k-1$, in which case it is inefficient to consider a code for $[1:2^k-1]$. 
The algorithm is modified in order that the compressor can reduce the amount of bits to send, by adopting a prefix code with another parameter.

At the $i$-th step, the match of the $i$-th phrase in the past is found, and the number of $\mathbf{Y}$-matches $n_i$ until the match is counted as before.
Besides, the algorithm computes $p_i$, the number of $\mathbf{Y}$-matches from the beginning until the current time. 
If $p_i<2^k-1$, then there is no need to use the code $h_k(\cdot)$, since it is efficient only when every index $1,2,\cdots,2^k-1$ is a possible candidate to be encoded.
Choose the parameter $\bar{k}_i=\lceil\log_2(p_i+1)\rceil$. 
The decompressor as well as the compressor can compute $\bar{k}_i$, since they are aware of every location of the $\mathbf{Y}$-matches.
The compressor conveys $h_{\bar{k}_i}(n_i)$ if $n_i\in[1:2^{\bar{k}_i}-1]$, and otherwise, it sends $h_{\bar{k}_i}(2^{\bar{k}_i})$ with the uncompressed sequence. 
After several phrases have been sent, $p_i$ will become as large as $2^k-1$.
In that case, $h_k(2^k)$ can be used.
This modification is useful for the original Willems algorithm without side information \cite{W1989} as well.

\begin{algorithm}[!t]
\caption{ }
\begin{algorithmic}
\State{Send $k$ bits to describe $(X_1,\cdots,X_L)$.}
\For{$i=2:N$}
	\begin{align*}
	s_i=\min\bigg\{t~:~ 0< t< (i-1)L+1,~(XY)_{(i-1)L+1}^{iL}=(XY)_{(i-1)L+1-t}^{iL-t}\bigg\}
	\end{align*}
\If{$s_i>0$} 
	\begin{align*}
	n_i=\left|\left\{t\in[1,s_i]~:~Y_{(i-1)L+1}^{iL}=Y_{(i-1)L+1-t}^{iL-t}\right\}\right|.	
	\end{align*}
\Else $~~n_i=0$.
\EndIf
	\begin{align*}
	p_i=\left|\left\{t\in[1,(i-1)L]~:~Y_{(i-1)L+1}^{iL}=Y_{(i-1)L+1-t}^{iL-t}\right\}\right|.	
	\end{align*}
\If{$p_i<2^k-1$} $\bar{k}_i=\lceil\log_2(p_i+1)\rceil$.
\If{$n_i\in[1:2^{\bar{k}_i}-1]$} send $h_{\bar{k}_i}(n_i)$.
\Else 
\State{Send $h_{\bar{k}_i}(2^{\bar{k}_i})$.}
\State{Send $X_{(i-1)L+1}^{iL}$ using $k$ bits.}
\EndIf
\Else
\If{$n_i\in[1:2^k-1]$} send $h_k(n_i)$.
\Else 
\State{Send $h_k(2^k)$.}
\State{Send $X_{(i-1)L+1}^{iL}$ using $k$ bits.}
\EndIf
\EndIf
\EndFor
\end{algorithmic}
\label{alg:mod2}
\end{algorithm}

\section{Variable-length-parsing universal compression}\label{section:var}

An extension of the LZ77 algorithm for side information was proposed in \cite{SB1995}, 
and a modified algorithm is described in this section.
We assume the processes $\mathbf{X}$ and $\mathbf{Y}$ are jointly stationary and ergodic, and 
the number of symbols the compressor aims to convey is denoted by $K$.

\subsection{Algorithm 4}

As in the sliding-window Lempel Ziv algorithm \cite{ZL1977} and the conditional version \cite{SB1995}, the compressor and the decompressor seek for a match within a sliding window. 
Let $n_w$ be the window size. 

First, the compressor sends $(X_1,\cdots,X_{n_w})$ using $\lceil n_w\log_2|\mathcal{A}|\rceil$ bits.
Let $u_1=n_w+1$ denotes the location for the compressor to begin with.
At the $i$-th step, the compressor finds the longest $(\mathbf{X},\mathbf{Y})$-match within the window,
and conveys the length of the longest match, $l_i$, to the decompressor, 
using the prefix code $g(\cdot)$ for nonnegative integers introduced in \cite[Appendix]{WZ1994}, 
the length of which is bounded as
	\begin{align}
	\ell(g(n))\leq \gamma \log_2 (n+1)\label{eqn:g}
	\end{align}
for some constant $\gamma$.
At this point, both of the compressor and the decompressor can compute how many $\mathbf{Y}$-matches of length $l_i$ the window has, which is denoted by $c_i$.
If $\lceil \log_2 c_i\rceil$ is longer than $\lceil l_i \log_2 |\mathcal{A}|\rceil$, 
then the compressor sends $\lceil l_i \log_2 |\mathcal{A}|\rceil$ bits to describe $(X_{u_i},\cdots,X_{u_i+l_i-1})$.
Otherwise, the location of the $(\mathbf{X},\mathbf{Y})$-match is transmitted to the decompressor using $\lceil \log_2 c_i\rceil$ bits.
In this case, since the decompressor can identify every $\mathbf{Y}$-match in the window, the $(\mathbf{X},\mathbf{Y})$-match is also can be found.
Set $u_{i+1}=u_i+l_i$ and keep parsing until it reaches the end of the sequences.

Note that while the algorithm presented in \cite{SB1995} counts the number of $\mathbf{Y}$-matches between the current position and the longest $(\mathbf{X},\mathbf{Y})$-match, and encodes the number with a prefix code for integers,
the algorithm in this section counts the number of all $\mathbf{Y}$-matches in the window.

\begin{algorithm}[!t]
\caption{ }
\begin{algorithmic}
\State{Send $\lceil n_w\log_2|\mathcal{A}|\rceil$ bits to describe $(X_1,\cdots,X_{n_w})$.}
\State{$i=1$.}
\State{$u_i=n_w+1$.}
\While{$u_i\leq K$}
	\begin{align*}
	l_i=\max\left\{n\leq K-u_i+1~:~\exists~t\in[1:n_w] ~\text{ s.t. }(XY)_{u_i}^{u_i+n-1}=(XY)_{u_i-t}^{u_i+n-1-t}\right\}	
	\end{align*}
\If{$l_i=0$} $l_i=1$.\EndIf
\State{Send $g(l_i)$.}
	\begin{align*}
	c_i=\left|\left\{ t\in[1:n_w]~:~Y_{u_i}^{u_i+l_i-1}=Y_{u_i-t}^{u_i+l_i-1-t}\right\}\right|
	\end{align*}

\If{$\lceil\log_2 c_i\rceil\geq \lceil l_i\log_2|\mathcal{A}|\rceil$ or $l_i=1$}
\State{Send $X_{u_i}^{u_i+l_i-1}$ using $\lceil l_i\log_2|\mathcal{A}|\rceil$ bits.}
\Else ~Send the location of $(\mathbf{X},\mathbf{Y})$-match using $\lceil \log_2 c_i\rceil$ bits.
\EndIf
\State{$u_{i+1}=u_i+l_i$.}
\State{$i=i+1$.}
\EndWhile
\end{algorithmic}
\label{alg:var}
\end{algorithm}

\section{Analysis} \label{section:ana}

In this section we show the optimality of Algorithm \ref{alg:main} for stationary processes and compare the performance of the modified algorithms -- Algorithm \ref{alg:mod1} and \ref{alg:mod2} -- with that of Algorithm \ref{alg:main}.
Further, we prove the optimality of Algorithm \ref{alg:var} for stationary and ergodic processes.

For the fixed-length-parsing algorithms, let $w_1^i(X_1^{iL}|Y_1^{iL})$, $w_2^i(X_1^{iL}|Y_1^{iL})$, $w_3^i(X_1^{iL}|Y_1^{iL})$ be the codewords constructed by Algorithms \ref{alg:main}, \ref{alg:mod1}, and \ref{alg:mod2}, respectively, to send the $i$-th phrase $X_{(i-1)L+1}^{iL}$, when the previous phrases are $X_1^{(i-1)L}$ and side information is $Y_1^{iL}$.
The entire codeword for a source $\mathbf{a}$ with side information $\mathbf{b}$ is denoted by $w_1(\mathbf{a}|\mathbf{b}),\cdots,w_4(\mathbf{a}|\mathbf{b})$.

\subsection{Optimality of Algorithm \ref{alg:main}}

Willems \cite{W1989} showed the asymptotic optimality of his fixed-length-parsing algorithm in the sense of expected length
for stationary sources. Here we show that the length per symbol of the algorithm with side information approaches the conditional entropy rate as the blocklength and the number of phrases go to infinity.
For any doubly infinite sequence $\mathbf{a}$$=(\cdots,a_{-1},a_0,a_1,\cdots)$, the repeated recurrence times are defined as 
	\begin{align*}
	T_{L,0}(\mathbf{a})&=0,\\
	T_{L,j}(\mathbf{a})&=\min\left\{t\in\mathbb{N}:t>T_{L,j-1}(\mathbf{a}),a_1^L=a_{1-t}^{L-t}\right\}~\forall j\geq 1.
	\end{align*}
A random variable $C$ indicates the number of $\mathbf{Y}$-matches that appeared in the past since the latest $(\mathbf{X},\mathbf{Y})$-match, 
which is 
	\begin{align}
	C=\left|\left\{t\in\mathbb{N}~:~t\leq T_{L,1}(\mathbf{X},\mathbf{Y}),~Y_1^L=Y_{1-t}^{L-t}\right\}\right|.\label{eqn:C}
	\end{align}
The following lemma provides the relationship between the expected value of $C$ and the conditional probability. 

\begin{lemma}\label{lemma:kac}
For any stationary processes $\mathbf{X},\mathbf{Y}$ and sequences $\mathbf{x},\mathbf{y}$, 
the conditional probability of $\mathbf{x}$ given $\mathbf{y}$ and the conditional expectation of $C$ satisfy 
\begin{align}
\mathbb{E}\left[C|X_1^L=\mathbf{x},Y_1^L=\mathbf{y}\right]\leq\left(\mathbb{P}\left[X_1^L=\mathbf{x}|Y_1^L=\mathbf{y}\right]\right)^{-1},
\end{align}
where $C$ is defined in \eqref{eqn:C}.
\end{lemma}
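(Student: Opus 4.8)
The plan is to read Lemma~\ref{lemma:kac} as a conditional (\emph{relative}) version of Kac's recurrence lemma and to prove it by an explicit disjointness argument that does not use ergodicity. Writing $A=\{X_1^L=\mathbf{x},Y_1^L=\mathbf{y}\}$, one has $\mathbb{E}[C\mid A]=\mathbb{E}[C\,\mathbf{1}_A]/\mathbb{P}[A]$ and $\mathbb{P}[X_1^L=\mathbf{x}\mid Y_1^L=\mathbf{y}]^{-1}=\mathbb{P}[Y_1^L=\mathbf{y}]/\mathbb{P}[A]$, so the claimed inequality is equivalent to
\begin{align}
\mathbb{E}\!\left[C\,\mathbf{1}_A\right]\ \leq\ \mathbb{P}\!\left[Y_1^L=\mathbf{y}\right].\label{eqn:planred}
\end{align}
The degenerate cases $\mathbb{P}[A]=0$ (right side infinite) and $\mathbb{P}[Y_1^L=\mathbf{y}]=0$ are trivial, so I assume both quantities positive.

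Next I would pass to a measure-preserving picture. Let $S$ denote the backward shift on $(\mathcal{A}\times\mathcal{B})^{\mathbb{Z}}$, which preserves the joint law by stationarity, and put $E=\{Y_1^L=\mathbf{y}\}$ and $F=A\subseteq E$. On the event $F$, the condition $(XY)_1^L=(XY)_{1-t}^{L-t}$ becomes $S^{t}\omega\in F$, so $T_{L,1}(\mathbf{X},\mathbf{Y})$ coincides with the first return time $R_F(\omega)=\min\{t\geq1:S^{t}\omega\in F\}$, while $Y_1^L=Y_{1-t}^{L-t}$ becomes $S^{t}\omega\in E$. Therefore $C\,\mathbf{1}_F=\sum_{t\geq1}\mathbf{1}_F\,\mathbf{1}\{R_F\geq t\}\,(\mathbf{1}_E\circ S^{t})$, and by Tonelli the left-hand side of \eqref{eqn:planred} equals $\sum_{t\geq1}\mathbb{P}(G_t)$ with $G_t\triangleq F\cap\{R_F\geq t\}\cap S^{-t}E$. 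I would record at this point that Poincar\'e recurrence (no ergodicity needed) makes $R_F$, hence $C$, finite almost surely on $F$, so this rearrangement is legitimate.

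The core step is a disjointness claim. Push $G_t$ forward by $S^{t}$: then $S^{t}G_t\subseteq E$ for every $t$, and the family $\{S^{t}G_t\}_{t\geq1}$ is pairwise disjoint. Indeed, if $S^{t}\omega=S^{t'}\omega'$ with $\omega\in G_t$, $\omega'\in G_{t'}$ and $t<t'$, then $\omega=S^{t'-t}\omega'$ shows that $\omega'\in F$ returns to $F$ within $t'-t<t'$ steps, contradicting $R_F(\omega')\geq t'$; and $S^{t}$ is injective. Since $S$ is measure-preserving, $\sum_{t\geq1}\mathbb{P}(G_t)=\sum_{t\geq1}\mathbb{P}(S^{t}G_t)\leq\mathbb{P}(E)$, which is exactly \eqref{eqn:planred}; dividing by $\mathbb{P}(F)$ then gives the lemma.

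I expect the only real obstacle to be conceptual: spotting the right ``skyscraper'' viewpoint, in which the $\mathbf{Y}$-recurrence structure sits above the $(\mathbf{X},\mathbf{Y})$-recurrence structure and $C$ counts the intermediate visits to $E$ during one excursion of the return map of $F$. After that, the disjointness is the standard Kac bookkeeping, and the remaining care is routine --- keeping the shift direction straight, checking a.s.\ finiteness of $C$ on $F$, and noting that non-ergodicity only weakens the conclusion to the stated inequality because the sets $S^{t}G_t$ need not exhaust $E$.
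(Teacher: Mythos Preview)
Your argument is correct and takes a genuinely different route from the paper. The paper indexes time by the successive $\mathbf{Y}$-recurrence epochs $T_j=T_{L,j}(\mathbf{Y})$, introduces $q_n=\mathbb{P}\big[X_{1-T_j}^{L-T_j}\neq\mathbf{x}\ \forall\, j\le n\ \big|\ Y_1^L=\mathbf{y}\big]$, and uses stationarity to derive the identity $\mathbb{P}\big[C=n,\,X_1^L=\mathbf{x}\,\big|\,Y_1^L=\mathbf{y}\big]=(q_{n-1}-q_n)-(q_n-q_{n+1})$; an Abel-type summation then collapses $\sum_n n[\cdot]$ to $1-\lim_n q_n\le 1$. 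You instead stay on the original time axis and replace that telescoping computation by the Kac-style disjointness of the shifted slices $S^tG_t$ inside $E$. Your route is shorter and makes the ``relative Kac'' structure transparent---$F\subseteq E$ with $C$ counting $E$-visits during one $F$-excursion---while the paper's route is more elementary (no measure-preserving language) and, as a byproduct, identifies the left side exactly as $\mathbb{P}\big[\exists\, j:\ X_{1-T_j}^{L-T_j}=\mathbf{x}\ \big|\ Y_1^L=\mathbf{y}\big]$, so it also pinpoints when equality holds. One minor remark: your appeal to Poincar\'e recurrence is harmless but unnecessary, since Tonelli for nonnegative summands already legitimizes the rearrangement and the disjointness bound whether or not $C$ is a.s.\ finite on $F$.
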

\begin{proof}
We simplify notation as 
\begin{align}
T_j\triangleq T_{L,j}(\mathbf{Y}).	
\end{align}
Define a probability 
	\begin{align}
	q_n \triangleq \mathbb{P}\left[ X_{1-T_j}^{L-T_j}\neq \mathbf{x}~\forall j \in[1: n] 	~|~ Y_1^L=\mathbf{y}\right].
	\end{align}

Note that the difference between $q_n$ and $q_{n+1}$ 
represents the probability that $\mathbf{X}$-match appears for the first time with the interval $T_{n+1}$, 
among the locations of $\mathbf{Y}$-matches, as
	\begin{align}
	q_n-q_{n+1}=\mathbb{P}\left[X_{1-T_j}^{L-T_j}\neq	\mathbf{x}~\forall j\in[1:n],~X_{1-T_{n+1}}^{L-T_{n+1}}=\mathbf{x}	~|~ Y_1^L=\mathbf{y}\right].
	\end{align}
This probability can be interpreted in the following way by stationarity as well:
	\begin{align*}
	&q_n-q_{n+1}\\
	=&\mathbb{P}\left[X_{1-T_j}^{L-T_j}\neq	\mathbf{x}~\forall~ j\in[1:n],~X_{1-T_{n+1}}^{L-T_{n+1}}=\mathbf{x}	~|~ Y_1^L=\mathbf{y}\right]\\
	=&\sum_{(t_1,\cdots,t_{n+1})} \mathbb{P}\left[T_1^{n+1}=t_1^{n+1}|Y_1^L=\mathbf{y}\right]\mathbb{P}\left[X_{1-t_j}^{L-t_j}\neq \mathbf{x}~\forall j\in[1:n],~X_{1-t_{n+1}}^{L-t_{n+1}}=\mathbf{x}|Y_1^L=\mathbf{y}	,T_1^{n+1}=t_1^{n+1}\right]\\
	=&\sum_{(t_1,\cdots,t_{n+1})} \mathbb{P}\left[E|Y_1^L=\mathbf{y}\right]\mathbb{P}\left[ X_{1-t_j+t_1}^{L-t_j+t_1}\neq \mathbf{x}~\forall j\leq n-1, X_{1-t_{n+1}+t_1}^{L-t_{n+1}+t_1}=\mathbf{x}|E\right]\\
	=&\mathbb{P}\left[X_1^L\neq \mathbf{x},X_{1-T_j}^{L-T_j}\neq \mathbf{x}\forall j\leq n-1,X_{1-T_{n}}^{L-T_{n}}=\mathbf{x}|Y_1^L=\mathbf{y}\right],
	\end{align*}
where the event $E$ is defined as
	\begin{align}
	E&=\bigg\{ Y_1^L=\mathbf{y},~T_j=1-t_{j+1}+t_1~\forall~j\in[1:n],~Y_{1+t_1}^{L+t_1}=\mathbf{y},~Y_t^{t+L-1}\neq\mathbf{y}~\forall t\in[2:t_1]\bigg\}.	
	\end{align}

That is to say, $q_n-q_{n+1}$ is the probability that the source matches for the first time at the $(n+1)$-th recurrence time in the past, and at the same time it is the probability that the source does not match at the current time and matches for the first time at the $n$-th recurrence time.

The conditional probability of $C$ and $X_1^L$ becomes 
	\begin{align*}
	&\mathbb{P}\left[ C=n,~X_1^L=\mathbf{x}~|~Y_1^L=\mathbf{y}\right]\\
	&=\mathbb{P}\left[X_1^L=\mathbf{x},~X_{1-T_j}^{L-T_j}\neq	\mathbf{x}~\forall j\leq n-1,~X_{1-T_{n}}^{L-T_n}=\mathbf{x}	~|~ Y_1^L=\mathbf{y}	\right]\\
	&=\mathbb{P}\left[X_{1-T_j}^{L-T_j}\neq	\mathbf{x}~\forall~ j\leq n-1,~X_{1-T_n}^{L-T_n}=\mathbf{x}	~|~ Y_1^L=\mathbf{y}\right]
	-\mathbb{P}\left[X_1^L\neq \mathbf{x},X_{1-T_j}^{L-T_j}\neq \mathbf{x}\forall j\leq n-1,X_{1-T_{n}}^{L-T_{n}}=\mathbf{x}|Y_1^L=\mathbf{y}\right]\\
	&=	(q_{n-1}-q_n)-(q_n-q_{n+1}).
	\end{align*}
Finally, we get the desired result from
	\begin{align}
	&\mathbb{E}\left[C|X_1^L=\mathbf{x},Y_1^L=\mathbf{y}	\right]	\mathbb{P}\left[X_1^L=\mathbf{x}~|~Y_1^L=\mathbf{y}\right]\nonumber\\
	&=\lim_{n\rightarrow\infty}\sum_{j=1}^n j\left(	q_{j-1}-2q_j+q_{j+1}\right)\\
	&=\lim_{n\rightarrow\infty} \left[	1-q_n-n(q_n-q_{n+1})	\right]\\
	&=\lim_{n\rightarrow \infty} (1-q_n)\label{eqn:o}\\
	&=\mathbb{P}\left[\bigcup_{j=1}^\infty \left\{ 	X_{1-T_{j}}^{L-T_{j}}=\mathbf{x}\right\}~|~ Y_1^L=\mathbf{y}	\right]\leq 1,
	\end{align}
where \eqref{eqn:o} holds because 
	$
	\sum_{j=1}^\infty (q_j-q_{j+1} )<\infty,
	$
and accordingly, $q_n-q_{n+1}=o\left(\frac{1}{n}\right)$.
\end{proof}
Note that Lemma \ref{lemma:kac} reduces to Kac's lemma \cite{K1947} in the case without side information, which implies that the mean-recurrence time of a sequence is lower than the reciprocal of the probability of the sequence.

Based on Lemma \ref{lemma:kac}, the average codelength of the algorithm with side information is analyzed next.
From this point, we abbreviate $T\triangleq T_{L,1}(\mathbf{X},\mathbf{Y})$.

\begin{theorem}\label{theorem:block}
The conditional average codelength of the $i$-th phrase is bounded as
\begin{align}
&\mathbb{E}\left[\ell\left(w_1^i(X_1^{iL}|Y_1^{iL})\right)|X_{(i-1)L+1}^{iL}=\mathbf{x},Y_{(i-1)L+1}^{iL}=\mathbf{y}\right]\nonumber\\
&\leq \imath_{X_1^L|Y_1^L}(\mathbf{x}|\mathbf{y})+\lceil\log_2(1+k)\rceil +\frac{k}{((i-1)L+1)\mathbb{P}[X_1^L=\mathbf{x},Y_1^L=\mathbf{y}]},
\end{align}
for any $\mathbf{x}\in\mathcal{A}^L,~\mathbf{y}\in\mathcal{B}^L$.
\end{theorem}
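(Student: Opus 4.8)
The plan is to decompose the conditional expected length of the $i$-th codeword along the three branches of Algorithm~\ref{alg:main} and to recognize the repeated-recurrence count $C$ of \eqref{eqn:C} sitting inside it. Fix $\mathbf{x}\in\mathcal{A}^L$ and $\mathbf{y}\in\mathcal{B}^L$ with positive probability, let $\mathcal{E}_i=\{X_{(i-1)L+1}^{iL}=\mathbf{x},\,Y_{(i-1)L+1}^{iL}=\mathbf{y}\}$, and write $\mathbf{1}\{\cdot\}$ for an indicator. Since the code sends $h_k(n_i)$, of length $\lceil\log_2(1+k)\rceil+\lfloor\log_2 n_i\rfloor$, when $n_i\in[1:2^k-1]$ and otherwise $h_k(2^k)$, of length $\lceil\log_2(1+k)\rceil$, followed by $k$ uncompressed bits, and since the events $\{n_i=0\}$ and $\{n_i\ge 2^k\}$ are disjoint, the first step is
\begin{align*}
\mathbb{E}\big[\ell(w_1^i(X_1^{iL}|Y_1^{iL}))\mid\mathcal{E}_i\big] ={}& \lceil\log_2(1+k)\rceil+\mathbb{E}\big[\lfloor\log_2 n_i\rfloor\,\mathbf{1}\{1\le n_i\le 2^k-1\}\mid\mathcal{E}_i\big]\\
&{}+k\,\mathbb{P}\big[n_i=0\mid\mathcal{E}_i\big]+k\,\mathbb{P}\big[n_i\ge 2^k\mid\mathcal{E}_i\big].
\end{align*}

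The second step uses joint stationarity to translate the $i$-th phrase to positions $1,\dots,L$: the conditional law of $n_i$ given $\mathcal{E}_i$ then coincides with the conditional law, given $\{X_1^L=\mathbf{x},Y_1^L=\mathbf{y}\}$, of $C\,\mathbf{1}\{T\le(i-1)L\}$, where $T=T_{L,1}(\mathbf{X},\mathbf{Y})$ -- when the first joint recurrence lies among the first $(i-1)L$ offsets the in-window match equals it and $n_i=C$, otherwise $n_i=0$ -- and $C\ge 1$ always, since $t=T$ is a $\mathbf{Y}$-match counted by $C$. The crux is the deterministic inequality
\begin{align*}
\lfloor\log_2 n_i\rfloor\,\mathbf{1}\{1\le n_i\le 2^k-1\}+k\,\mathbf{1}\{n_i\ge 2^k\}\le\log_2 C,
\end{align*}
valid under this coupling because the left side is $0$ when $n_i=0$, equals $\lfloor\log_2 C\rfloor\le\log_2 C$ when $1\le C\le 2^k-1$, and equals $k=\log_2 2^k\le\log_2 C$ when $C\ge 2^k$. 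Taking expectations, $\mathbb{E}[\lfloor\log_2 n_i\rfloor\,\mathbf{1}\{1\le n_i\le 2^k-1\}\mid\mathcal{E}_i]+k\,\mathbb{P}[n_i\ge 2^k\mid\mathcal{E}_i]\le\mathbb{E}[\log_2 C\mid X_1^L=\mathbf{x},Y_1^L=\mathbf{y}]$, and by Jensen's inequality and then Lemma~\ref{lemma:kac} the latter is at most $\log_2\mathbb{E}[C\mid X_1^L=\mathbf{x},Y_1^L=\mathbf{y}]\le-\log_2\mathbb{P}[X_1^L=\mathbf{x}\mid Y_1^L=\mathbf{y}]=\imath_{X_1^L|Y_1^L}(\mathbf{x}|\mathbf{y})$.

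Only $k\,\mathbb{P}[n_i=0\mid\mathcal{E}_i]$ is now left. Under the same coupling $\{n_i=0\}=\{T>(i-1)L\}$, so it equals $k\,\mathbb{P}[T>(i-1)L\mid X_1^L=\mathbf{x},Y_1^L=\mathbf{y}]$; Markov's inequality bounds this by $k\,\mathbb{E}[T\mid X_1^L=\mathbf{x},Y_1^L=\mathbf{y}]/((i-1)L+1)$, and Kac's lemma~\cite{K1947} applied to the jointly stationary process $(\mathbf{X},\mathbf{Y})$ gives $\mathbb{E}[T\mid X_1^L=\mathbf{x},Y_1^L=\mathbf{y}]\le 1/\mathbb{P}[X_1^L=\mathbf{x},Y_1^L=\mathbf{y}]$. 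Summing the pieces gives the theorem; the degenerate phrase $i=1$, of length $k$, is covered because $k\le k/\mathbb{P}[X_1^L=\mathbf{x},Y_1^L=\mathbf{y}]$.

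The step I expect to be the main obstacle is the second: making rigorous, through joint stationarity, the identification of $n_i$ given $\mathcal{E}_i$ with the truncated count $C\,\mathbf{1}\{T\le(i-1)L\}$, and noticing that the logarithmic cost charged when the match count is moderate and the flat penalty $k$ charged when it exceeds $2^k$ amalgamate into the single quantity $\mathbb{E}[\log_2 C]$. That amalgamation is exactly where the overflow threshold $2^k$ designed into the prefix code $h_k$ meets the elementary bound $k\le\log_2 C$ on $\{C\ge 2^k\}$; the rest -- Jensen, Markov, and the two invocations of Lemma~\ref{lemma:kac}/Kac -- is routine.
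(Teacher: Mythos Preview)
Your proof is correct and follows the same skeleton as the paper's: shift by stationarity, merge the in-range cost $\lfloor\log_2 n_i\rfloor$ and the overflow penalty $k\,\mathbf{1}\{n_i\ge 2^k\}$ into the single bound $\log_2 C$, then apply Jensen's inequality and Lemma~\ref{lemma:kac} to obtain $\imath_{X_1^L|Y_1^L}(\mathbf{x}|\mathbf{y})$.

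The one substantive difference is how the residual ``no-match'' penalty is handled. The paper isolates the event $\{C\le 2^k-1,\,T>(i-1)L\}$, conditions on $C=j$ so that $T=T_{L,j}(\mathbf{Y})$, proves the identity $\mathbb{E}[T_{L,j}(\mathbf{Y})\mid Y_1^L=\mathbf{y}]=j\,\mathbb{E}[T_{L,1}(\mathbf{Y})\mid Y_1^L=\mathbf{y}]$, applies Markov and Kac to the $\mathbf{Y}$-process, and then invokes Lemma~\ref{lemma:kac} a second time to collapse $\sum_j j\,\mathbb{P}[C=j]$ into $\mathbb{E}[C]\le 1/\mathbb{P}[X_1^L=\mathbf{x}\mid Y_1^L=\mathbf{y}]$. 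You instead keep the slightly larger event $\{T>(i-1)L\}$ and bound it in one stroke by Markov's inequality together with Kac's lemma applied directly to the joint process $(\mathbf{X},\mathbf{Y})$, giving $\mathbb{E}[T\mid E_{\mathbf{x}\mathbf{y}}]\le 1/\mathbb{P}[E_{\mathbf{x}\mathbf{y}}]$. Your route is shorter and sidesteps the $T_{L,j}(\mathbf{Y})$ machinery entirely, arriving at the identical final bound.
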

\begin{proof}
Define the event $E_{\mathbf{x}\mathbf{y}}\triangleq 	\left\{ X_1^L=\mathbf{x},Y_1^L=\mathbf{y}\right\}.$
Using stationarity, the expected length of codeword of the $i$-th phrase is 
	\begin{align*}
	&\mathbb{E}\left[\ell\left(w_1^i(X_1^{iL}|Y_1^{iL})\right)|X_{(i-1)L+1}^{iL}=\mathbf{x},Y_{(i-1)L+1}^{iL}=\mathbf{y}\right]\\
	&=\mathbb{E}\left[\ell\left(w_1^i(X_{1-(i-1)L}^{L}|Y_{1-(i-1)L}^{L})\right)
	|E_{\mathbf{x}\mathbf{y}}\right]\\
	&=\sum_{j=1}^{2^k-1}\lfloor \log_2j\rfloor \mathbb{P}\left[ 	C=j,T\leq (i-1)L |E_{\mathbf{x}\mathbf{y}}	\right]+k\mathbb{P}\left[		C\geq 2^k \text{ or } T\geq (i-1)L+1|E_{\mathbf{x}\mathbf{y}}\right]+\lceil\log_2(1+k)\rceil\\
	&\leq \sum_{j=1}^{2^k-1}\lfloor \log_2j\rfloor~ \mathbb{P}\left[C=j|E_{\mathbf{x}\mathbf{y}}\right]+\sum_{j=2^k}^\infty \log_2j ~\mathbb{P}\left[		C=j ~|E_{\mathbf{x}\mathbf{y}}\right]
	+k \mathbb{P}\left[		C\leq 2^k-1,~ T\geq (i-1)L+1|E_{\mathbf{x}\mathbf{y}}\right]+\lceil\log_2(1+k)\rceil\\
	&\stackrel{(a)}{\leq}
	 \log_2\mathbb{E}\left[C|E_{\mathbf{x}\mathbf{y}}\right]+\lceil\log_2(1+k)\rceil
	 +k\sum_{j=1}^{2^k-1}\mathbb{P}\left[C=j|E_{\mathbf{x}\mathbf{y}}\right] \mathbb{P}\left[ T_j\geq (i-1)L+1|C=j,E_{\mathbf{x}\mathbf{y}}\right]\\
	&\stackrel{(b)}{\leq} 	\imath_{X_1^L|Y_1^L}(\mathbf{x}|\mathbf{y})+\lceil\log_2(1+k)\rceil +\frac{k\mathbb{E}[C|E_{\mathbf{x}\mathbf{y}}]}{((i-1)L+1)\mathbb{P}[Y_1^L=y]}\\
	&\stackrel{(c)}{\leq} \imath_{X_1^L|Y_1^L}(\mathbf{x}|\mathbf{y})+\lceil\log_2(1+k)\rceil+\frac{k}{((i-1)L+1)\mathbb{P}[E_{\mathbf{x}\mathbf{y}}]},
	\end{align*}
where $(a)$ holds by the concavity of logarithm and Jensen's inequality, and $(b)$ follows by Lemma \ref{lemma:kac}, Markov's inequality and 
	\begin{align}
	&\mathbb{E}\left[T_j|Y_1^L=\mathbf{y}\right]=j\, \mathbb{E}[T_1|Y_1^L=\mathbf{y}]\label{eqn:j}\\
	&\leq \frac{j}{\mathbb{P}[Y_1^L=\mathbf{y}]}.
	\end{align}
\eqref{eqn:j} is given by
\begin{align}
&\mathbb{P}\left[T_n-T_{n-1}=t, Y_1^L=\mathbf{y}\right]\nonumber\\
=&\sum_{t_1^{n-1} : 0<t_1<\cdots<t_{n-1}}	\mathbb{P}\left[Y_1^L=\mathbf{y}, T_1^{n-1}=t_1^{n-1}\right]\mathbb{P}\left[T_n-T_{n-1}=t| Y_1^L=\mathbf{y}, T_1^{n-1}=t_1^{n-1}\right]\\
=&\sum_{t_1^{n-1}} \mathbb{P}\big[Y_{a+1}^{a+L}=\mathbf{y} ~\forall a\in \mathcal{S}, Y_{b+1}^{b+L}\neq \mathbf{y}~\forall b\in [-t_{n-1}:0]\backslash \mathcal{S}\big]\nonumber\\
&\times \mathbb{P}\left[Y_{1-t_{n-1}-t}^{L-t_{n-1}-t}=\mathbf{y}| Y_{a+1}^{a+L}=\mathbf{y} ~\forall a\in \mathcal{S}, ~Y_{b+1}^{b+L}\neq \mathbf{y}~\forall b\in [-t_{n-1}:0]\backslash \mathcal{S}\right]\\
=&\sum_{t_1^{n-1}} \mathbb{P}\left[Y_{a+1+t_{n-1}}^{a+L+t_{n-1}}=\mathbf{y} ~\forall a\in \mathcal{S},~Y_{b+1+t_{n-1}}^{b+L+t_{n-1}}\neq \mathbf{y}~\forall b\in [-t_{n-1}:0]\backslash \mathcal{S}\right]\nonumber\\
&\times \mathbb{P}\left[Y_{1-t}^{L-t}=\mathbf{y}| Y_{a+1+t_{n-1}}^{a+L+t_{n-1}}=\mathbf{y} ~\forall a\in \mathcal{S},~Y_{b+1+t_{n-1}}^{b+L+t_{n-1}}\neq \mathbf{y}~\forall b\in [-t_{n-1}:0]\backslash \mathcal{S}\right]\\
=&\sum_{t_1^{n-1}} \mathbb{P}\left[Y_{1-t}^{L-t}=\mathbf{y}, Y_{a+1+t_{n-1}}^{a+L+t_{n-1}}=\mathbf{y} ~\forall a\in \mathcal{S},~Y_{b+1+t_{n-1}}^{b+L+t_{n-1}}\neq \mathbf{y}~\forall b\in [-t_{n-1}:0]\backslash \mathcal{S}\right]\\
=&\mathbb{P}\left[Y_{1-t}^{L-t}=\mathbf{y},Y_1^L=\mathbf{y}\right]\\
=&\mathbb{P}\left[T_1=t,Y_1^L=\mathbf{y}\right]~\forall~ n,t\in\mathbb{N},~\mathbf{y}\in\mathcal{Y}^L,
\end{align}
where a set $\mathcal{S}$ is defined as $\mathcal{S}=\{0\}\cup \{-t_1,\cdots,-t_{n-1}\}$.
Inequality $(c)$ is implied by Lemma \ref{lemma:kac}.
\end{proof}

Utilizing the bound for the codelength of each phrase given in Theorem \ref{theorem:block}, we show that the average compression rate is asymptotically bounded by the conditional entropy rate.

\begin{theorem}\label{theorem:optimal}
Algorithm \ref{alg:main} is asymptotically optimal in the sense that
\begin{align}
\lim_{L\rightarrow \infty}\lim_{N\rightarrow \infty}\frac{\mathbb{E}\left[\ell(w_1(X_1^{NL}|Y_1^{NL}))	\right]}{NL}\leq H(\mathbf{X}|\mathbf{Y}).\label{eqn:infinite}
\end{align}
\end{theorem}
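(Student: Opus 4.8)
The plan is to derive \eqref{eqn:infinite} by summing the per-phrase bound of Theorem \ref{theorem:block} over the $N$ phrases, averaging out the source and the side information, and then taking the two limits in the prescribed order. Writing $\ell(w_1(X_1^{NL}|Y_1^{NL}))=\sum_{i=1}^N\ell(w_1^i(X_1^{iL}|Y_1^{iL}))$, the first phrase contributes exactly $k$ bits, and for $i\geq 2$ I would plug in the bound of Theorem \ref{theorem:block} and take the expectation over the $i$-th phrase. By joint stationarity the block $(X_{(i-1)L+1}^{iL},Y_{(i-1)L+1}^{iL})$ is distributed as $(X_1^L,Y_1^L)$, so the first term averages to $\mathbb{E}[\imath_{X_1^L|Y_1^L}(X_1^L|Y_1^L)]=H(X_1^L|Y_1^L)$, the $\lceil\log_2(1+k)\rceil$ term is a constant, and the last term becomes $\frac{k}{(i-1)L+1}$ summed over the pairs $(\mathbf{x},\mathbf{y})$ of positive probability, hence at most $\frac{k|\mathcal{A}|^L|\mathcal{B}|^L}{(i-1)L+1}$.

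Collecting these contributions and dividing by $NL$ gives
\begin{align*}
\frac{\mathbb{E}[\ell(w_1(X_1^{NL}|Y_1^{NL}))]}{NL}\leq \frac{k}{NL}+\frac{N-1}{N}\cdot\frac{H(X_1^L|Y_1^L)+\lceil\log_2(1+k)\rceil}{L}+\frac{k|\mathcal{A}|^L|\mathcal{B}|^L}{NL}\sum_{i=2}^N\frac{1}{(i-1)L+1}.
\end{align*}
Holding $L$ fixed and letting $N\to\infty$, the first term vanishes, $\frac{N-1}{N}\to 1$, and since $\sum_{i=2}^N\frac{1}{(i-1)L+1}=O(\log N)$ while the factor $k|\mathcal{A}|^L|\mathcal{B}|^L$ does not depend on $N$, the last term is $O(\log N/N)\to 0$; this leaves $\lim_{N\to\infty}\frac{\mathbb{E}[\ell(w_1)]}{NL}\leq \frac{H(X_1^L|Y_1^L)+\lceil\log_2(1+k)\rceil}{L}$.

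It then remains to let $L\to\infty$. Because $k=\lceil L\log_2|\mathcal{A}|\rceil$, the overhead $\frac{\lceil\log_2(1+k)\rceil}{L}=O(\log L/L)\to 0$, and for jointly stationary $(\mathbf{X},\mathbf{Y})$ the joint and marginal entropy rates exist, so $\frac1L H(X_1^L|Y_1^L)=\frac1L\big(H(X_1^L,Y_1^L)-H(Y_1^L)\big)\to H(\mathbf{X},\mathbf{Y})-H(\mathbf{Y})=H(\mathbf{X}|\mathbf{Y})$, the conditional entropy rate, which is exactly \eqref{eqn:infinite}.

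The step that requires the most care — more a matter of bookkeeping than a genuine obstacle — is the order of the iterated limit: the overhead term inherited from Theorem \ref{theorem:block} carries the exponential factor $|\mathcal{A}|^L|\mathcal{B}|^L$, which is harmless only because the $\frac1N$ prefactor absorbs it once $N\to\infty$ is taken first; reversing the order would fail with this bound. The other routine check is that Theorem \ref{theorem:block} bounds a conditional expectation given a pair $(\mathbf{x},\mathbf{y})$ with $\mathbb{P}[X_1^L=\mathbf{x},Y_1^L=\mathbf{y}]>0$, so when averaging one sums, over that support, the quantity $\mathbb{P}[X_1^L=\mathbf{x},Y_1^L=\mathbf{y}]\cdot\frac{k}{((i-1)L+1)\,\mathbb{P}[X_1^L=\mathbf{x},Y_1^L=\mathbf{y}]}=\frac{k}{(i-1)L+1}$, which produces the $k|\mathcal{A}|^L|\mathcal{B}|^L/((i-1)L+1)$ bound used above.
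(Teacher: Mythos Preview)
Your proposal is correct and follows essentially the same route as the paper's own proof: decompose the total length as $k+\sum_{i=2}^N\ell(w_1^i)$, average the per-phrase bound of Theorem~\ref{theorem:block} over $(\mathbf{x},\mathbf{y})$ using stationarity to produce $(N-1)H(X_1^L|Y_1^L)+(N-1)\lceil\log_2(1+k)\rceil+k|\mathcal{A}|^L|\mathcal{B}|^L\sum_{i=2}^N\frac{1}{(i-1)L+1}$, send $N\to\infty$ so the harmonic-type sum over $N$ vanishes, and then let $L\to\infty$ to obtain $H(\mathbf{X}|\mathbf{Y})$. Your added remarks on the support restriction and on why the order of limits cannot be reversed are consistent with the paper's argument and simply make explicit what the paper leaves implicit.
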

\begin{proof}
For any finite $N$ and $L$, the expected codelength is bounded as 
\begin{align}
	&\mathbb{E}\left[\ell(w_1(X_1^{NL}|Y_1^{NL}))	\right]\nonumber\\
	&=k+\sum_{i=2}^N \mathbb{E}\left[l\left(w_1^i(X_1^{iL}|Y_1^{iL})\right)	\right]\\
	&=k+\sum_{i=2}^N\sum_{\mathbf{x}\in\mathcal{A}^L}\sum_{\mathbf{y}\in\mathcal{B}^L}\mathbb{P}\left[	X_{(i-1)L+1}^{iL}=\mathbf{x},Y_{(i-1)L+1}^{iL}=\mathbf{y}\right]\mathbb{E}\left[\ell\left(w_1^i(X_1^{iL}|Y_1^{iL})\right)|X_{(i-1)L+1}^{iL}=\mathbf{x},Y_{(i-1)L+1}^{iL}=\mathbf{y}\right]\\
	&\leq \sum_{i=2}^N\sum_{\mathbf{x}\in\mathcal{A}^L}\sum_{\mathbf{y}\in\mathcal{B}^L}\mathbb{P}\left[E_{\mathbf{x}\mathbf{y}}\right]	\bigg(\lceil \log_2(1+k)\rceil +\imath_{X_1^L|Y_1^L}(\mathbf{x}|\mathbf{y})+\frac{k}{\left((i-1)L+1\right) \mathbb{P}\left[E_{\mathbf{x}\mathbf{y}}\right]}	\bigg)+k
	\\&=k+(N-1)\lceil \log_2(1+k)\rceil+(N-1)H(X_1^L|Y_1^L)+k|\mathcal{A}|^L|\mathcal{B}|^L\sum_{i=2}^N\frac{1}{(i-1)L+1}.
	\end{align}
As the number of phrases goes to infinity, the asymptotic compression rate can be upper bounded as:
	\begin{align}
	&\lim_{N\rightarrow \infty}\frac{\mathbb{E}\left[\ell(w_1(X_1^{NL}|Y_1^{NL}))	\right]}{NL}\nonumber\\
	 &\leq\frac{\lceil \log_2(1+k)\rceil}{L}+\frac{H(X_1^L|Y_1^L)}{L}+\frac{k|\mathcal{A}|^L|\mathcal{B}|^L}{L}\lim_{N\rightarrow\infty}\frac{1}{N}\sum_{i=2}^N\frac{1}{(i-1)L+1}\\
	&=\frac{\lceil \log_2(1+k)\rceil}{L}+\frac{H(X_1^L|Y_1^L)}{L}.\label{eqn:limit}
	\end{align}
Therefore, \eqref{eqn:infinite} follows as $L$ goes to infinity.
\end{proof}

\subsection{Algorithm \ref{alg:mod1}}

While Algorithm \ref{alg:main} exploits $(\mathbf{X},\mathbf{Y})$-matches only, $\mathbf{X}$-matches are also used in Algorithm \ref{alg:mod1} in order to denote the match location.
However, this modified strategy necessitates a flag bit to indicate whether the codeword signals an $(\mathbf{X},\mathbf{Y})$-match or an $\mathbf{X}$-match.
We show a sufficient condition for Algorithm \ref{alg:mod1} to outperform Algorithm \ref{alg:main}.

\begin{table*}[!t]
\captionof{table}{Comparison of Codelengths}\label{table:code}
\centering\small
\begin{tabular}{llll}
& Cases& Algorithm \ref{alg:main} & Algorithm \ref{alg:mod1} \\
 1)& $1\leq n_i\leq 2^k-1$ & $\lceil\log_2(1+k)\rceil+\lfloor \log_2n_i\rfloor$ &$\lceil\log_2(1+k)\rceil+\lfloor \log_2n_i\rfloor+1$ \\
2)& $n_i=2^k$& $\lceil \log_2(1+k)\rceil+k$ & $\lceil \log_2(1+k)\rceil+1$ \\
 3)& $n_i\geq 2^k+1,~r_i\leq 2^{m-1}$& $\lceil \log_2(1+k)\rceil+k$& $\lceil \log_2(1+m)\rceil+\lfloor \log_2r_i\rfloor+1$\\
 4)& $n_i\geq 2^k+1,~r_i\geq 2^m$ &$\lceil \log_2(1+k)\rceil+k$ &$\lceil \log_2(1+m)\rceil+k+1$
\end{tabular}
\end{table*}

The codelengths are summarized in Table \ref{table:code}, where $n_i$ is the number of $\mathbf{Y}$-matches until the $(\mathbf{X},\mathbf{Y})$-match is reached, and $r_i$ is the number of symbols needed to reach the $\mathbf{X}$-match, as before.
For cases $2),3),4)$, Algorithm \ref{alg:mod1} constructs a shorter codeword than Algorithm \ref{alg:main} does if $m<k$, while Algorithm \ref{alg:mod1} outputs one more bit for case $1)$.
Hence, Algorithm \ref{alg:mod1} is efficient when the frequency of case $1)$ is low.
Denote
	\begin{align*}
	E_1&\triangleq \left\{C\leq 2^k-1,~	T\leq (i-1)L\right\},\\
	E_2&\triangleq \left\{C=2^k,~ T\leq (i-1)L\right\},\\
	E_{3,t}&\triangleq \left(\left\{C\geq 2^k+1\right\}\cup \left\{T\geq(i-1)L+1\right\}\right)\cap\left\{T_1(\mathbf{X})=t\right\},\\
	E_4&\triangleq \left(\left\{C\geq 2^k+1\right\}\cup \left\{T\geq(i-1)L+1\right\}\right)\cap \left\{T_1(\mathbf{X})>d_i\right\},
	\end{align*}
where $d_i\triangleq \min(2^m-1,(i-1)L)$. The difference between the average codelengths is
	\begin{align}
	&\mathbb{E}\left[	\ell(w_1^i(X_1^{iL}|Y_1^{iL}))|X_{(i-1)L+1}^{iL}=\mathbf{x},Y_{(i-1)L+1}^{iL} =\mathbf{y}\right]
	-\mathbb{E}\left[	\ell(w_2^i(X_1^{iL}|Y_1^{iL}))|X_{(i-1)L+1}^{iL}=\mathbf{x},Y_{(i-1)L+1}^{iL} =\mathbf{y}\right]\nonumber\\
	&=-\mathbb{P}\left[E_1|E_{\mathbf{x}\mathbf{y}}\right]+(k-1)\mathbb{P}\left[E_2|E_{\mathbf{x}\mathbf{y}}\right] +\sum_{t=1}^{d_i}\mathbb{P}\left[E_{3,t}|E_{\mathbf{x}\mathbf{y}}\right]
	\left(\lceil \log_2(1+k)\rceil+k-\lceil\log_2(1+m)\rceil-\lfloor\log_2 t\rfloor-1	\right) \nonumber\\
	&+\mathbb{P}[E_4|E_{\mathbf{x}\mathbf{y}}] \left(\lceil \log_2(1+k)\rceil-\lceil\log_2(1+m)\rceil-1	\right)\\
	&\geq -\mathbb{P}\left[E_1|E_{\mathbf{x}\mathbf{y}}\right]+(k-1)\mathbb{P}\left[	E_2|E_{\mathbf{x}\mathbf{y}}\right]+\mathbb{P}\left[\cup_{t\leq d_i} E_{3,t}|E_{\mathbf{x}\mathbf{y}}\right]
	\left(\lceil \log_2(1+k)\rceil+k-\lceil\log_2(1+m)\rceil-m-1	\right)\nonumber\\
	&+\mathbb{P}\big[E_4|E_{\mathbf{x}\mathbf{y}}\big] \left(\lceil \log_2(1+k)\rceil-\lceil\log_2(1+m)\rceil-1	\right)\\
	&\geq -\mathbb{P}\left[E_1|E_{\mathbf{x}\mathbf{y}}\right]+
	\mathbb{P}\left[E_1^c|E_{\mathbf{x}\mathbf{y}}\right]\left(\lceil \log_2(1+k)\rceil-\lceil\log_2(1+m)\rceil-1	\right).
	\end{align}
This gap is positive if
	\begin{align}
	\mathbb{P}\left[E_1|E_{\mathbf{x}\mathbf{y}}	\right]\leq 1-\frac{1}{\lceil \log_2(1+k)\rceil-\lceil\log_2(1+m)\rceil},\label{eqn:suff}
	\end{align}
which means that Algorithm \ref{alg:mod1} is advantageous for relatively short sequences.

\subsection{Algorithm \ref{alg:mod2}}

Described in Section \ref{sub:modi2}, Algorithm \ref{alg:mod2} improves performance by exploiting the fact
that the decompressor is able to locate every match of the side information.
Let $\hat{\mathcal{B}}\subset\mathcal{B}^{iL}$ be a set composed of $\mathbf{y}\in\mathcal{B}^{iL}$ such that the number of matches, $p_i$, of the $i$-th phrase among the $i-1$ previous phrases is smaller than $2^k-1$. 
If $\mathbf{y}\in\hat{\mathcal{B}}$, a new parameter $\bar{k}_i=\log_2(p_i+1)$ replaces $k$. The conditional average codelength of Algorithm 3 for the $i$-th phrase is
	\begin{align}
	&\mathbb{E}\left[\ell(w_3^i(X_1^{iL}|Y_1^{iL}))|X_{(i-1)L+1}^{iL}=\mathbf{x},Y_1^{iL} =\mathbf{y}\right]\nonumber\\
	&=\sum_{j=1}^{2^{\bar{k}_i}-1} \lfloor\log_2 j\rfloor
	 \mathbb{P}\left[ C=j,T\leq (i-1)L|X_1^L=\mathbf{x},Y_{1-(i-1)L}^{L} =\mathbf{y}\right]\nonumber\\
	&+k\mathbb{P}\left[ C\geq 2^{\bar{k}_i}\text{ or }T>(i-1)L|X_1^L=\mathbf{x},Y_{1-(i-1)L}^{L} =\mathbf{y}\right]+\lceil\log_2 (1+\bar{k}_i)\rceil,\quad\forall ~\mathbf{y}\in\hat{\mathcal{B}}.\label{eqn:mod2}
	\end{align}
The second term of of \eqref{eqn:mod2} is
	\begin{align}
		&\mathbb{P}\left[ C\geq 2^{\bar{k}_i}\text{ or }T>(i-1)L|X_1^L=\mathbf{x},Y_{1-(i-1)L}^{L} =\mathbf{y}\right]=\mathbb{P}\left[T>(i-1)L|X_1^L=\mathbf{x},Y_{1-(i-1)L}^{L} =\mathbf{y}\right],
	\end{align}
since 
	\begin{align}
	T\leq (i-1)L\quad\implies \quad C\leq 2^{\bar{k}_i}-1.	
	\end{align}
The performance gap corresponding to the $i$-th phrase between Algorithm \ref{alg:main} and Algorithm 3 for this case results in 
\begin{align}
&\mathbb{E}\left[	\ell(w_1^i(X_1^{iL}|Y_1^{iL}))|X_{(i-1)L+1}^{iL}=\mathbf{x},Y_1^{iL} =\mathbf{y}\right]-\mathbb{E}\left[	\ell(w_3^i(X_1^{iL}|Y_1^{iL}))|X_{(i-1)L+1}^{iL}=\mathbf{x},Y_1^{iL} =\mathbf{y}\right]\nonumber\\
&=\sum_{j=2^{\bar{k}_i}}^{2^k-1} \lfloor \log_2j\rfloor
\mathbb{P}\left[C=j,T\leq(i-1)L|X_1^L=\mathbf{x},Y_{1-(i-1)L}^L=\mathbf{y}\right]
+\lceil\log_2 (1+k)\rceil-\lceil\log_2 (1+\bar{k}_i)\rceil\\
&=\lceil\log_2 (1+k)\rceil-\lceil\log_2 (1+\bar{k}_i)\rceil\geq 0,~\forall~\mathbf{y}\in\hat{\mathcal{B}}.
\end{align}
Hence, when $\mathbf{y}\in\hat{\mathcal{B}}$, the performance of Algorithm 3 is at least as good as that of Algorithm \ref{alg:main}, while if $\mathbf{y}\not\in\hat{\mathcal{B}}$, both algorithms yield the same expected codelength.
Thus, Algorithm \ref{alg:main} always constructs codewords no shorter than those of Algorithm 3, which exploits the fact that the first several phrases typically find few side information matches.

\subsection{Optimality of Algorithm \ref{alg:var}}

In \cite{WZ1994}, the sliding-window LZ algorithm was shown to be optimal.
We prove in this subsection that Algorithm \ref{alg:var} is optimal,
as the data compression rate approaches the conditional entropy rate when the window size and the source length increase.

\begin{theorem}\label{theorem:window}
Algorithm \ref{alg:var} is asymptotically optimal in the sense that
\begin{align}
\lim_{n_w\rightarrow \infty}\lim_{K\rightarrow \infty}\frac{\mathbb{E}\left[\ell(w_4(X_1^{K}|Y_1^{K}))	\right]}{K}\leq H(\mathbf{X}|\mathbf{Y}).
\end{align}
\end{theorem}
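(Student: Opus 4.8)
The plan is to follow the pattern of the proof of Theorem~\ref{theorem:optimal}: bound the number of bits Algorithm~\ref{alg:var} spends on each parsed phrase -- now of a random length $l_i$ -- and then sum over phrases. Let $P$ denote the (random) number of phrases, so that $\sum_{i=1}^{P}l_i=K-n_w$, and decompose
\begin{align}
\ell\big(w_4(X_1^{K}|Y_1^{K})\big)=\lceil n_w\log_2|\mathcal{A}|\rceil+\sum_{i=1}^{P}\Big(\ell\big(g(l_i)\big)+\rho_i\Big),
\end{align}
where $\rho_i$ is the payload of the $i$-th phrase. By the rule of Algorithm~\ref{alg:var} one has $\rho_i\le\lceil\log_2 c_i\rceil$ whenever $l_i\ge 2$ (and $\rho_i=\lceil\log_2|\mathcal{A}|\rceil$ when $l_i=1$), and in all cases $\rho_i\le\lceil\log_2 n_w\rceil$ since $c_i\le n_w$; moreover $c_i\ge 1$ as soon as $l_i\ge 2$, because the longest $(\mathbf{X},\mathbf{Y})$-match of length $l_i$ is in particular a $\mathbf{Y}$-match of length $l_i$. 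By \eqref{eqn:g} the code-for-lengths term satisfies $\ell(g(l_i))\le\gamma\log_2(l_i+1)$. It therefore suffices to control the main term $\sum_i\rho_i$ and the overhead, namely $\gamma\sum_i\log_2(l_i+1)$, the header $\lceil n_w\log_2|\mathcal{A}|\rceil$, and the $O(P)$ rounding bits.

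The overhead is handled by showing that phrase lengths grow and that $P=o(K)$ in the iterated limit. By the string-matching estimate underlying the optimality proof of \cite{WZ1994}, applied to the jointly stationary ergodic process $(\mathbf{X},\mathbf{Y})$, whose entropy rate is $H(\mathbf{X},\mathbf{Y})$, the longest $(\mathbf{X},\mathbf{Y})$-match into a window of size $n_w$ lies, on an event whose probability tends to $1$ as $n_w\to\infty$, between $\frac{(1-\epsilon)\log_2 n_w}{H(\mathbf{X},\mathbf{Y})}$ and $\frac{(1+\epsilon)\log_2 n_w}{H(\mathbf{X},\mathbf{Y})}$. By the ergodic theorem, the fraction of positions $u\in[n_w+1:K]$ at which this fails converges, as $K\to\infty$, to a quantity that vanishes as $n_w\to\infty$; since every phrase whose length is not atypically small covers at least on the order of $\log_2 n_w$ symbols, it follows that $P/K\to 0$ in the iterated limit. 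Using that $t\mapsto\log_2(t+1)/t$ is eventually decreasing,
\begin{align}
\gamma\sum_{i=1}^{P}\log_2(l_i+1)\le\gamma\,\frac{\log_2(\ell_0+1)}{\ell_0}\,(K-n_w)+O\big(P\log_2\ell_0\big)
\end{align}
for a threshold $\ell_0=\ell_0(n_w)\to\infty$; together with $\lceil n_w\log_2|\mathcal{A}|\rceil=o(K)$ and $O(P)=o(K)$, this makes all overhead contributions $o(K)$ after $K\to\infty$ and then $n_w\to\infty$.

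For the main term I would establish a windowed, conditional analogue of Kac's lemma -- in the spirit of Lemma~\ref{lemma:kac} -- for the $\mathbf{Y}$-matches: a first-moment (Markov) bound shows that the number $c_i$ of $\mathbf{Y}$-matches of length $l_i$ inside the window of size $n_w$ is, with high probability, at most $n_w\,2^{-l_i(H(\mathbf{Y})-\epsilon)}$; intuitively, a typical $\mathbf{Y}$-string of length $\ell$ recurs, by Kac's lemma applied to $\mathbf{Y}$, at expected intervals at most $2^{\ell(H(\mathbf{Y})+o(1))}$, so a window of $n_w$ symbols holds on the order of $n_w\,2^{-\ell H(\mathbf{Y})}$ of its occurrences, and the deviation tail is polynomially small in $n_w$. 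Combining this with the lower bound $\log_2 n_w\le l_i(H(\mathbf{X},\mathbf{Y})+\epsilon)$ from the previous step yields, on an event of probability $\to 1$,
\begin{align}
\log_2 c_i\le l_i\big(H(\mathbf{X},\mathbf{Y})+\epsilon\big)-l_i\big(H(\mathbf{Y})-\epsilon\big)=l_i\big(H(\mathbf{X}|\mathbf{Y})+2\epsilon\big),
\end{align}
while on the complementary event we fall back on $\log_2 c_i\le\log_2 n_w$; since the exceptional event has probability $o(1/\log_2 n_w)$ per position, the ergodic theorem makes the exceptional phrases number $o(K/\log_2 n_w)$ and contribute $o(K)$ in aggregate. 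Hence $\sum_i\rho_i\le(H(\mathbf{X}|\mathbf{Y})+2\epsilon)(K-n_w)+o(K)$; assembling the bounds, dividing by $K$, letting $K\to\infty$, then $n_w\to\infty$, then $\epsilon\downarrow 0$, and noting that $\ell(w_4)/K$ is bounded by a constant so that the almost-sure per-symbol bounds pass to the expectation, gives $\lim_{n_w\to\infty}\lim_{K\to\infty}\mathbb{E}\big[\ell(w_4(X_1^{K}|Y_1^{K}))\big]/K\le H(\mathbf{X}|\mathbf{Y})$.

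The step I expect to be the main obstacle is the uniform control of $c_i$ along the parse. The phrase length $l_i$ is a random, data-dependent quantity strongly correlated with the very $\mathbf{Y}$-window in which one counts matches, so the first-moment bound must be carried out conditionally on the local joint history in a way consistent with stationarity, rather than through a crude union bound over the $\le n_w$ shifts; the per-phrase high-probability estimates must then be aggregated into a bound on the \emph{random} sum $\sum_{i\le P}\log_2 c_i$, which is exactly where an ergodic theorem tailored to greedy parsing is required. In particular, the count of exceptional positions must be shown to be $o(K/\log_2 n_w)$ so that it survives multiplication by the worst-case per-phrase cost $\log_2 n_w$; securing this -- which is what the polynomial-in-$n_w$ decay of the Markov/large-deviations tail is there to supply -- is the most delicate point of the argument.
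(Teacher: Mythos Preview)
Your overall plan---decompose into the raw header, the length-code bits $\ell(g(l_i))$, and the payload $\rho_i\le\lceil\log_2 c_i\rceil$, then show phrase lengths are typically $\Theta(\log_2 n_w)$ so that overheads vanish and $\log_2 c_i\approx l_i\, H(\mathbf X|\mathbf Y)$---is the same skeleton the paper uses. The execution differs in two places, and the paper's choices neutralize precisely the obstacle you flag at the end.

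First, instead of arguing probabilistically that $P=o(K)$, the paper lays down a deterministic grid: with $l_0=\lceil\log_2 n_w/(H(\mathbf X,\mathbf Y)+\epsilon)\rceil$ it partitions $[n_w{+}1:K]$ into subintervals of length $l_0$. A phrase either sits inside one subinterval (set $\mathcal F$) or straddles a boundary ($\mathcal F^c$). Every $i\in\mathcal F^c$ contains some subinterval endpoint, so $|\mathcal F^c|\le K/l_0$ \emph{deterministically}; phrases in $\mathcal F$ have $l_i<l_0$, and their total length is at most $K\,\mathbb P[T_{l_0,1}(\mathbf X,\mathbf Y)>n_w]\to 0$ by \cite[Corollary~3.4]{WZ1994}. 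Because the number of phrases carrying a $\log_2 c_i$ cost is already $O(K/\log_2 n_w)$, the worst-case per-phrase penalty $\log_2 n_w$ is harmless and no polynomial-in-$n_w$ tail decay is needed---the grid device removes the very difficulty you anticipated.

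Second, for the payload the paper does not attempt your Kac-type first-moment bound on $c_i$ as a function of the random $l_i$. It splits $\mathcal F^c$ according to whether $l_i<l_0$ and, when $l_i\ge l_0$, whether $c_i$ exceeds the \emph{fixed} threshold $2^{l_0(H(\mathbf X|\mathbf Y)+\epsilon)}$. The good part gives $\sum\log_2 c_i\le |\mathcal F^c|\,l_0(H(\mathbf X|\mathbf Y)+\epsilon)\le K(H(\mathbf X|\mathbf Y)+\epsilon)$. For the bad part, $c_i>2^{l_0(H(\mathbf X|\mathbf Y)+\epsilon)}$ with $l_i\ge l_0$ forces the repeated recurrence $T_{l_0,\,J}(\mathbf Y)\le n_w$ with $J=2^{l_0(H(\mathbf X|\mathbf Y)+\epsilon)}+1$, and Algoet's theorem \cite{A1999}, $\tfrac1n\log_2(T_{n,J_n}(\mathbf Y)/J_n)\to H(\mathbf Y)$ a.s., makes this probability vanish. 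This black-box repeated-recurrence result replaces your ``windowed conditional Kac'' step and sidesteps the correlation between $l_i$ and the window that worried you; your route could likely be pushed through, but it would require the careful conditional-on-history argument you describe, whereas the paper's reduction to a fixed length $l_0$ and a fixed threshold avoids it entirely.
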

\begin{proof}
Let $C_p$ be the number of phrases of $X_{n_w+1}^K$ parsed by Algorithm \ref{alg:var}.
As depicted before, $l_i$ signifies the length of the $i$-th phrase as
	\begin{align}
	l_1&=\max\bigg\{ n\leq K-n_w~:~\exists ~t\in[1:n_w]\text{ s.t. }(XY)_{n_w+1}^{n_w+n}=(XY)_{n_w+1-t}^{n_w+n-t}	\bigg\},\\
	l_i&=\max\bigg\{ n\leq K-u_i+1~:~\exists~t\in[1:n_w]\text{ s.t. }(XY)_{u_i}^{u_i+n-1}=(XY)_{u_i-t}^{u_i+n-1-t}\bigg\},~\forall~i\in[2:C_p],
	\end{align}
where $u_i$ is defined as
	\begin{align}
	u_1&=n_w+1,\\
	u_i&=u_{i-1}+l_{i-1},~\forall~i\in[2:C_p].
	\end{align}
For some fixed value $\epsilon>0$, define $l_0$ as
	\begin{align}
	l_0=\left\lceil\frac{\log_2 n_w}{H(\mathbf{X},\mathbf{Y})+\epsilon}\right\rceil.	
	\end{align}
Divide the interval $[n_w+1:K]$ into subintervals of length 	$l_0$:
	\begin{align*}
	\mathcal{I}_1=[n_w+1:n_w+l_0],~\mathcal{I}_2=[n_w+l_0+1:n_w+2l_0],\cdots.	
	\end{align*}
The length of the last subinterval may be shorter than $l_0$.
The number of subintervals is $\lceil \frac{K-n_w}{l_0}\rceil$.
Define $\mathcal{F}$ as a set with indices of the phrases which are totally included by a subinterval and of which the next symbol is in the same subinterval, i.e., 
	\begin{align}
	\mathcal{F}=\bigg\{i\leq C_p~:~\exists ~t\in\left[1:	\left\lceil \frac{K-n_w}{l_0}\right\rceil\right] \text{ s.t. }[u_i:u_i+l_i]\subset \mathcal{I}_t\bigg\}.
	\end{align}
Then, the complement of the set is
	\begin{align}
	\mathcal{F}^c&=\left\{i:~	[u_i:u_i+l_i]\not\subset\mathcal{I}_t~\forall~t\right\}.
	\end{align}	
Since for all $i\in\mathcal{F}^c$
	\begin{align}
	\exists~t\text{ s.t. }\max 	\mathcal{I}_t\in [u_i:u_i+l_i-1],
	\end{align}
that is to say, the $i$-th phrase $[u_i:u_i+l_i-1]$ must include the last position of some subinterval,
the cardinality of $\mathcal{F}^c$ satisfies 
	\begin{align}
	|\mathcal{F}^c| \leq 	\left\lceil \frac{K-n_w}{l_0}\right\rceil\leq  \frac{K}{l_0}.	\label{eqn:comp}
	\end{align}

As the prefix code $g(\cdot)$ satisfies \eqref{eqn:g}, the average codelength can be bounded as	
	\begin{align}
	&\mathbb{E}\left[\ell\left(w_4\left(X_1^K|Y_1^K\right)\right)\right]\nonumber\\
	&=\lceil n_w\log_2|\mathcal{A}|\rceil + \mathbb{E}\left[\sum_{i=1}^{C_p}\min\left( \lceil \log_2 c_i\rceil,\lceil l_i\log_2|\mathcal{A}|\rceil\right) \right] +\mathbb{E}\left[\sum_{i=1}^{C_p} \ell\left(g(l_i)\right)\right] 	\\
	&\leq \lceil n_w\log_2|\mathcal{A}|\rceil+\gamma_1 \mathbb{E}\left[\sum_{i\in\mathcal{F}}l_i\right] +\gamma_2 \mathbb{E}\left[\sum_{i\in\mathcal{F}^c}\log_2 (l_i+1)\right]+ \mathbb{E}\left[\sum_{i\in\mathcal{F}^c}(\log_2 c_i+1)\right]\label{eqn:sum}
	\end{align}
for some constants $\gamma_1,~\gamma_2$.
The second term of \eqref{eqn:sum} is
	\begin{align}
	 &\mathbb{E}\left[\sum_{i\in\mathcal{F}}l_i\right] \nonumber\\
	 &\leq l_0\mathbb{E}\left[\left|\left\{t~:~[u_i:u_i+l_i-1]\subset \mathcal{I}_t~\exists~i\in\mathcal{F}\right\}\right|\right]	\label{eqn:size}\\
	 &\leq l_0 	 \left\lceil\frac{K-n_w}{l_0}\right\rceil\mathbb{P}\left[T_{l_0,1}(\mathbf{X},\mathbf{Y})>n_w\right]\label{eqn:p}\\
	&\leq K\mathbb{P}\left[T_{l_0,1}(\mathbf{X},\mathbf{Y})>n_w\right],\label{eqn:2nd}
	\end{align}
where \eqref{eqn:size} holds since $i\in\mathcal{F}$ implies $l_i<l_0$. \eqref{eqn:p} holds because the recurrence time of the subinterval that includes the $i$-th phrase is not within the window.
The third term of \eqref{eqn:sum} is bounded as
	\begin{align}
	&\mathbb{E}\left[\sum_{i\in\mathcal{F}^c}\log_2 (l_i+1)\right]\nonumber\\
	&\leq\mathbb{E}\left[ |\mathcal{F}^c| \log_2 \left(\frac{1}{|\mathcal{F}^c|}\sum_{i\in\mathcal{F}^c} l_i+1\right)\right]\label{eqn:je}\\
	&\leq \mathbb{E}\left[ |\mathcal{F}^c|\log_2 \left(\frac{K}{|\mathcal{F}^c|} +1\right)\right]\\
	&\leq \frac{K}{l_0}\log_2(l_0+1)\label{eqn:j2}
	\end{align}
where \eqref{eqn:je} follows by Jensen's inequality and \eqref{eqn:j2} follows by \eqref{eqn:comp}.
In order to find an upper bound of the fourth term, partition $\mathcal{F}^c$ into three sets:
	\begin{align}
	\mathcal{G}_1&=\left\{i\in\mathcal{F}^c~:~l_i<l_0\right\},\\
	\mathcal{G}_2&=\left\{i\in\mathcal{F}^c~:~l_i\geq l_0,~c_i>2^{l_0 (H(\mathbf{X}|\mathbf{Y})+\epsilon})\right\},\\
	\mathcal{G}_3&=\left\{i\in\mathcal{F}^c~:~l_i\geq l_0,~c_i\leq 2^{l_0(H(\mathbf{X}|\mathbf{Y})+\epsilon)}\right\}.	
	\end{align}
For any $t\in\left[1:\left\lceil \frac{K-n_w}{l_0}\right\rceil\right]$, define a function $f$ as
	\begin{align}
	f(t)=i\in[1:C_p]~\text{s.t.}~ \max \mathcal{I}_t\in [u_i:u_i+l_i-1].	
	\end{align}
$f(t)$ is well defined, because parsing is continuous and no phrase overlaps with another.
The average cardinality of $\mathcal{G}_1$ is bounded as
	\begin{align}
	\mathbb{E}[|\mathcal{G}_1|]&=\mathbb{E}\left[\sum_{i\in\mathcal{F}^c}\mathbf{1}\{i\in\mathcal{G}_1\}\right]\\
	&\leq \mathbb{E}\left[\sum_{t=1}^{\left\lceil \frac{K-n_w}{l_0}\right\rceil} \mathbf{1}\{f(t)\in\mathcal{G}_1\}\right]\label{eqn:stew}\\
	&=\sum_{t=1}^{\left\lceil \frac{K-n_w}{l_0}\right\rceil} \mathbb{P}[l_{f(t)}<l_0]\\
	&\leq \sum_t \mathbb{P}[T_{l_0,1}(\mathbf{X},\mathbf{Y})>n_w]\\
	&\leq \frac{K}{l_0}\mathbb{P}[T_{l_0,1}(\mathbf{X},\mathbf{Y})>n_w],
	\end{align}
where \eqref{eqn:stew} holds because for any $i\in\mathcal{F}^c$ there exists $t$ such that $f(t)=i$.
The number of $\mathbf{Y}$-matches within the window cannot exceed the window size $n_w$, and hence,
	\begin{align}
	&\mathbb{E}\left[ \sum_{i\in\mathcal{G}_1}\log_2 c_i\right]	\leq \log_2 n_w\frac{K}{l_0} \mathbb{P}\left[ T_{l_0,1}(\mathbf{X},\mathbf{Y})>n_w\right].\label{eqn:t1}
	\end{align}
Similarly, we have an upper bound of the summation over $\mathcal{G}_2$ as
	\begin{align}
	&\mathbb{E}\left[\sum_{i\in\mathcal{G}_2}\log_2 c_i\right] \leq \log_2 n_w \mathbb{E}\left[ |\mathcal{G}_2|\right]\\
	&\leq \log_2 n_w \frac{K}{l_0}\mathbb{P}\left[ T_{l_0,2^{l_0(H(\mathbf{X}|\mathbf{Y})+\epsilon)}+1}(\mathbf{Y})\leq n_w\right]\label{eqn:t2}.
	\end{align}
$i\in\mathcal{G}_3$ implies $\log_2c_i\leq l_0(H(\mathbf{X}|\mathbf{Y})+\epsilon)$, and
	\begin{align}
	&\mathbb{E}\left[ \sum_{i\in\mathcal{G}_3} \log_2 c_i\right]\leq l_0(H(\mathbf{X}|\mathbf{Y})+\epsilon)\mathbb{E}\left[|\mathcal{G}_3|\right]\\
	&\leq K(H(\mathbf{X}|\mathbf{Y})+\epsilon).\label{eqn:t3}
	\end{align}
	
Combining all the bounds, the asymptotic data compression rate is
 	\begin{align}
	&\lim_{n_w\rightarrow \infty}\lim_{K\rightarrow \infty}\frac{\mathbb{E}\left[\ell(w_4(X_1^{K}|Y_1^{K}))	\right]}{K}\nonumber\\
	&\leq 
	\gamma_1 \lim_{n_w\rightarrow \infty} \mathbb{P}\left[ T_{l_0,1}(\mathbf{X},\mathbf{Y})>n_w\right]+\gamma_2\lim_{n_w\rightarrow \infty} \frac{\log_2(l_0+1)}{l_0} +\lim_{n_w\rightarrow\infty} \frac{\log_2 n_w}{l_0}\mathbb{P}\left[ T_{l_0,1}(\mathbf{X},\mathbf{Y})>n_w\right]\nonumber\\
	&+\lim_{n_w\rightarrow\infty} \frac{\log_2 n_w}{l_0}\mathbb{P}\left[ T_{l_0,2^{l_0(H(\mathbf{X}|\mathbf{Y})+\epsilon)}+1}(\mathbf{Y})\leq n_w\right]+H(\mathbf{X}|\mathbf{Y})+\epsilon,
	\end{align}
	for any $\epsilon>0$.
Since
	\begin{align}
	\lim_{n_w\rightarrow\infty} \frac{1}{l_0}\log n_w >H(\mathbf{X},\mathbf{Y}),
	\end{align}
\cite[Corollary 3.4]{WZ1994} leads to
	\begin{align}
	\lim_{n_w\rightarrow\infty} \mathbb{P}\left[ T_{l_0,1}(\mathbf{X},\mathbf{Y})>n_w\right]=0.
	\end{align}
For any unbounded increasing sequence $J_n$, we have \cite{A1999}
	\begin{align}
	\lim_{n\rightarrow\infty}\frac{1}{n}\log_2\frac{T_{n,J_n}(\mathbf{Y})}{J_n}=H(\mathbf{Y})\quad\text{a.s.}	\label{eqn:algoet}
	\end{align}
Therefore, applying $J_n=2^{n(H(\mathbf{X}|\mathbf{Y})+\epsilon)}+1$ to \eqref{eqn:algoet},
	\begin{align}
	\lim_{n_w\rightarrow\infty} 	\mathbb{P}\left[ T_{l_0,2^{l_0(H(\mathbf{X}|\mathbf{Y})+\epsilon)}+1}(\mathbf{Y})\leq n_w\right]=0,
	\end{align}
which completes the proof.

\end{proof}

\begin{figure}[!t]
\centering
 \includegraphics[width=.6\columnwidth]{./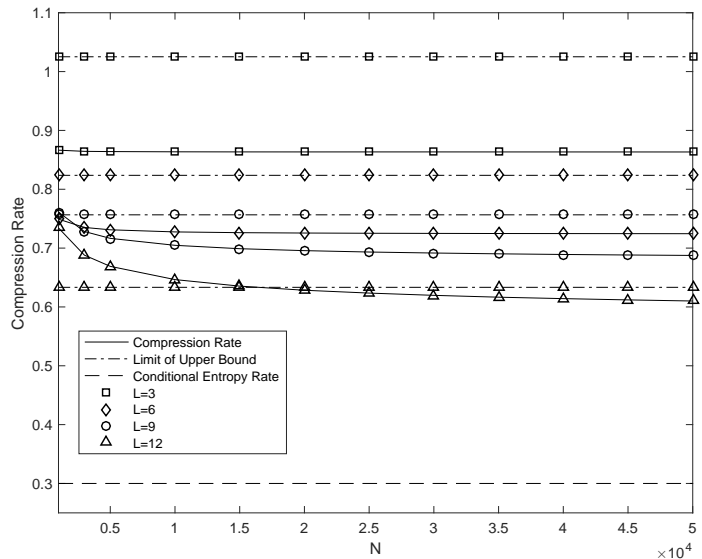}\\
 \caption{Compression rates of Algorithm \ref{alg:main} for varying $N$.}
 \label{fig:main}
\end{figure}

\begin{figure}[!t]
\centering
 \includegraphics[width=.6\columnwidth]{./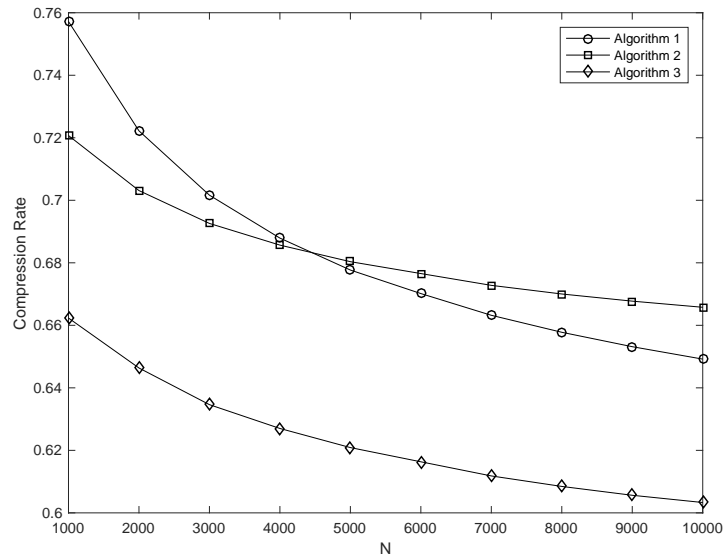}\\
 \caption{Compression rates of Algorithms \ref{alg:main}, \ref{alg:mod1}, \ref{alg:mod2} when $L=15$.}
 \label{fig:mod1}
\end{figure}

\begin{figure}[!t]
\centering
 \includegraphics[width=.6\columnwidth]{./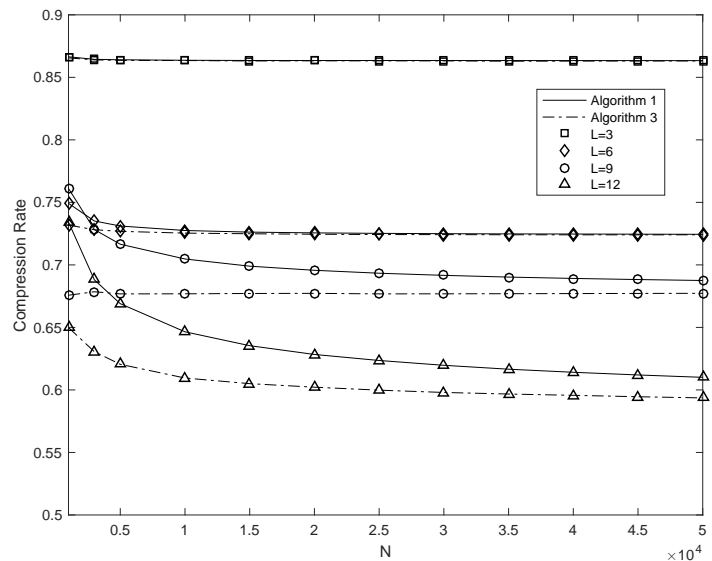}\\
  \caption{Compression rates of Algorithm \ref{alg:main} and \ref{alg:mod2} parametrized by L.}
  \label{fig:mod2}
\end{figure}

\section{Simulation} \label{section:sim}

In this section, simulation results are provided considering Markov chain model for $\mathcal{A}=\mathcal{B}=\{0,1\}$.
We have four states $(X_i,Y_i)=(0,0),~(0,1),~(1,0),~(1,1)$, and compare the proposed algorithms with the transition matrix
	\begin{align*}
	\left[\begin{array}{cccc} q &.25 &.25 &\frac{1-q}{3}\\ \frac{1-q}{3} &.25& .25&\frac{1-q}{3}\\ \frac{1-q}{3} &.25& .25&\frac{1-q}{3}\\\frac{1-q}{3} &.25& .25& q \end{array}\right].
	\end{align*}

Fig. \ref{fig:main} shows the result of compression rate when Algorithm \ref{alg:main} is used and $q=0.9$.
The rate versus $N$, the number of blocks, is shown with \eqref{eqn:limit}, which is the limit of upper bound of the compression rate.
Note that for every $L$, the rate converges to the values below upper bounds.
Also, as $L$ increases, the limit approaches the conditional entropy rate.

In Fig. \ref{fig:mod1}, compression rates of three algorithms are compared.
Algorithm \ref{alg:mod1} is more beneficial than Algorithm \ref{alg:main} is when source is relatively shorter, 
since beginning blocks satisfy the condition \eqref{eqn:suff}.
As $N$ grows, Algorithm \ref{alg:main} gradually outstrips, for blocks start to be assigned with longer codeword from Algorithm \ref{alg:mod1} than codeword from Algorithm \ref{alg:main}.
Algorithm \ref{alg:mod2} outperforms Algorithm \ref{alg:main} for any $N$ as proven before. 
In Fig. \ref{fig:mod2}, however, it is shown that since both algorithms yield the same codeword after some period, both curves become indistinguishable as $N$ goes to infinity.

\end{document}